\newtheorem{theorem}{Theorem}
\newtheorem{corollary}{Corollary}[section]
\theoremstyle{definition}
\theoremstyle{remark}
\def\d{\mathrm{d}}
\def\e{\mathrm{e}}
\def\i{\mathrm{i}}
\renewcommand{\leq}{\leqslant}
\renewcommand{\geq}{\geqslant}
\renewcommand\digamma{\Psi}
\newcommand{\Nc}{m}
\begin{document}

\renewcommand{\labelitemi}{$\bullet$}
\renewcommand{\labelitemii}{$\star$}

\selectlanguage{english}

\title[1D Anderson localisation and products of random matrices]
  {Lyapunov exponents, one-dimensional Anderson localisation and products of random matrices}  

\author{Alain Comtet}
\address{Univ. Paris-Sud, LPTMS, UMR 8626 du CNRS, 91405 Orsay, France \\
UPMC Univ. Paris 6, 75005 Paris, France}

\author{Christophe Texier}
\address{Univ. Paris-Sud, LPTMS, UMR 8626 du CNRS, 91405 Orsay, France\\
 LPS, UMR 8502 du CNRS, 91405 Orsay, France}   

\author{Yves Tourigny}
\address{School of Mathematics\\
        University of Bristol\\
        Bristol BS8 1TW, United Kingdom}



\date{\today}


\begin{abstract}
The concept of Lyapunov exponent has long occupied a central place in
the theory of Anderson localisation; its interest in this particular
context is that it provides a reasonable measure of the localisation
length. The Lyapunov exponent also features prominently in the theory
of products of random matrices pioneered by Furstenberg.
After a brief historical survey, we describe some recent work that
exploits the close connections between these topics. We review
the known solvable cases  of disordered quantum mechanics 
involving random point scatterers
and discuss a new solvable case.
Finally,
we point out some limitations of the Lyapunov exponent
as a means of studying localisation properties.
\end{abstract} 

\ams{ 60B20 , 60G51 , 82B44 }


\pacs{72.15.Rn , 02.50.-r}




\maketitle


\section{Introduction}

{\em Anderson localisation} is the term used to describe a generic phenomenon, discovered in the late fifties by P. W. Anderson, whereby the addition of a certain amount
of disorder or randomness in an otherwise deterministic medium causes
the waves propagating in the medium to become localised in space \cite{An}.
For quantum systems, the
understanding of transport properties requires a
thorough study of how the presence of disorder affects the nature of
quantum states--- a question first addressed in an earlier paper of
Landauer and Helland \cite{LAN}. Since that time, the one-dimensional
case has been discussed extensively and has led to a better
understanding of the physical mechanisms that are responsible for
localisation. A remarkable feature of one-dimensional
systems is that almost all states become localised as soon as there is {\em any} disorder. 
This result, first conjectured by Mott and Twose \cite {MT}, was
made more rigorous by Borland \cite{BORL} who considered an infinite
chain of identical localised potentials separated by regions of zero
potential. Assuming that the lengths of these regions are independent random
variables with the same probability distribution, Borland studied the growth rate of the wave function on a
semi-infinite chain with prescribed boundary conditions at one end. He then
argued, by using a ``matching argument'', that the positivity of the growth rate implies
the exponential localisation of the
wave functions. A purely mathematical proof using the properties of
transfer matrices was given by Matsuda and Ishii \cite{MI}. 

These physical arguments can in fact be made completely
rigorous. 
The exponential growth of the solutions of the Cauchy problem is a crucial
feature of the proof that the spectrum has no absolutely continuous component \cite{PAS},
and that it is pure-point \cite{GMP}. 
If $\psi(x,E)$ denotes a solution of the Cauchy problem (i.e. a solution
of the Schr\"odinger equation on the
positive half-line subject to boundary conditions at $x=0$)
then  the quantity
\begin{equation}
  \gamma(E) := \lim_{x \rightarrow \infty} \frac{\ln |\psi(x,E) |}{x}
  \label{gamma}
\end{equation}
is a self-averaging quantity
called the
{\it Lyapunov exponent} of the disordered system. 
A rigorous demonstration of the fact that, under certain hypotheses, $\gamma$ also quantifies the exponential decay of the {\em eigenfunctions}---
and therefore that its reciprocal can serve as a definition of the localisation length---
appears for
the first time in Ref.~\cite{CAR}.
It should be borne in mind, however, that this definition of the localisation length is only useful if certain conditions are fulfilled;
cases where the definition is inappropriate will be considered briefly
in \S \ref{conclusionSection}.

In order to illustrate the localisation phenomenon, Anderson made
use of a model in which the wave function solves a difference equation.
The general solution of this ``tight-binding'' model takes the form of
a product of random matrices, say,
\begin{equation}
  \Pi_n := M_n M_{n-1} \cdots M_1
  \label{product}
\end{equation}
where the $M_j$ are independent and identically-distributed square
matrices with a common probability measure $\mu(\d M)$. The quantity
\begin{equation}
  \gamma_{\mu} 
  := \lim_{n \rightarrow \infty} \frac{{\mathbb E} \left (\ln  |\Pi_n | \right )}{n}
  \label{gammaMu}
\end{equation}
is called the Lyapunov exponent of the product of random matrices and,
as we shall soon see, is effectively the same as $\gamma$.

Anderson localisation has been a powerful motivation for the
study of products of random matrices.
For this reason, the search for precise conditions 
on the measure $\mu(\d M)$ that
would guarantee the existence and the positivity of the Lyapunov
exponent $\gamma_{\mu}$ has been of particular interest. The main result in this respect is due to
Furstenberg \cite{Fu}--- a result
published the same year as Borland's paper; 
see also Oseledec \cite{Os} whose work can be considered as an extension  to dynamical systems of the work of Furstenberg and  Kesten \cite{FK} .
Further developments of these results, and their application to
the study of localisation,
are described in the works of Ishii \cite{Is}, Bougerol and
Lacroix~\cite{BL}, Carmona and Lacroix \cite{CaLa},  
Lifshits {\it et al.} \cite{LGP},  Luck \cite{Luc92},
Pastur and Figotin \cite{PF}, Crisanti
{\it et al.} \cite{CriPalVul93} and the references therein.

An important milestone in the development of the theory of localisation was the discovery of
the relationship between the Lyapunov exponent and the integrated
density of states $ N(E) $. It turns out that the characteristic
function 
\begin{equation}
\Omega(E) := \gamma(E) -\i\pi N(E)\,,
\label{characteristicFunction}
\end{equation}
viewed  as  a complex-valued
function on $\mathbb{R}$, is {\em analytic} in
the upper half of the complex plane \cite{HJ,T}. Interestingly,
this analyticity property of the characteristic function  was
in fact exploited much earlier by Dyson in his famous paper on the
dynamics of a disordered chain~\cite{Dy}.  

 Products of random matrices appear naturally in several other
 problems related to the physics of disordered systems---
see the monographs \cite{CriPalVul93,Luc92}---  and is still the subject of active research \cite{For}.
The  concept
 of  Lyapunov exponent  is a very useful tool for analysing a large
 class of  systems with quenched disorder, such as magnetic systems. A well-known 
 prototype is the Ising model in a random magnetic field, where  the
 calculation of the free enegy reduces to analysing an
 infinite product of $2\times 2$ matrices; in this context, the
 Lyapunov exponent is proportional to the free energy per spin.

The group from which the matrices $M_n$ in the product~(\ref{product})
are drawn
varies not only with the physical context but also with the choice of
vector basis. In the analysis of the Schr\"odinger equation in a random
potential, a standard choice is to consider the vector formed by the
wave function and its derivative $(\psi'(x)\,,\,\psi(x))$. 
As we shall see later on, the evolution of this vector is governed
by matrices belonging to the group $\text{SL}(2,{\mathbb R})$. This is the
formulation chosen for example in Refs.~\cite{CTT1,ComLucTexTou12} and
in the present article. 
If, instead, one chooses to focus on the scattering aspects of the problem
then the wave function, in a region where $V(x)=0$, is a combination 
of incoming and outgoing waves
$\psi(x)=A\,\e^{\i kx}+B\,\e^{-\i kx}$. 
In this setting, it is natural to consider transfer matrices $T$ connecting
pairs of complex amplitudes $(A,B)$. Current conservation then implies
that such transfer matrices belong to the group $\text{U}(1,1)$; see
the appendix of Ref.~\cite{CTT1} for further details,
and Ref.~\cite{Pen94} for a
pedagogical presentation and a review of the literature. 

Interestingly, this scattering formulation provides a natural way of 
generalising the one-dimensional model--- which is associated with
products of 
$2\times2$ matrices---
to a quasi-one-dimensional model where the matrices in the product are 
$2\Nc\times2\Nc$, where $\Nc$ is the
number of conducting channels.
The localisation problem then involves--- not just one--- but rather $\Nc$ (counting multiplicity)
Lyapunov exponents $\gamma_1\leq\cdots\leq\gamma_{\Nc}$.  
Since the pioneering work of Dorokhov \cite{Dor82}, whose results were later
rediscovered independently by Mello, Pereyra and
Kumar~\cite{MelPerKum88}, this 
topic has attracted a lot of attention owing to its relevance in the
description of weakly disordered metallic wires; see the review~\cite{Bee97}. 
These early works relied on the so-called {\em isotropy assumption}, namely that
each elementary slice of disordered metal redistributes the current uniformly 
amongst the $\Nc$ conducting channels. This assumption
produces a set of
Lyapunov exponents with the behaviour $\gamma_j=\gamma_1\,[1+\beta\,(j-1)]$, where
$\beta\in\{1,\,2,\,4\}$ is the Dyson index~\cite{Bee97}.
The smallest Lyapunov exponent, which scales  with the number of channels like
$\gamma_1\propto1/\Nc$, is usually
interpreted as the reciprocal of the localisation length. 
The ideas of Dorokhov and Mello {\em et al.} have since been extended to other symmetry classes of
disordered models; see the review \cite{EveMir08}.
Another line of research stimulated by these ideas
is to look for models where the isotropy hypothesis may be relaxed
\cite{ChaBer93,MelTom91,MutKla99,MutGop02}, with the aim of
studying the passage from one-dimensional to higher-dimensional
localisation.
To close these brief remarks on the multichannel case, we
point out that the Lyapunov exponent is at the heart of numerical
studies of localisation that rely on the scaling approach~\cite{KraMac93}.

The present paper revisits the interplay between products of random
matrices of $\text{SL}(2,{\mathbb R})$ and
one-dimensional Anderson localisation. Whereas most of the
works cited above emphasise {\it discrete} models obtained by making
the tight-binding approximation~\cite{CriPalVul93,Luc92}, we shall focus here
on one-dimensional continuous models  that make use of the notion of
{\it point scatterer}. One familiar example 
is the Kronig--Penney model \cite{KP} based on
delta-scatterers. We shall see that, by considering 
a natural generalisation of the concept of point scatterer, we
arrive at a useful interpretation of a general product 
of random matrices as a model of disorder. We illustrate the
fruitfulness of this interpretation 
by exhibiting concrete instances of the probability measure
$\mu(\d M)$ of the matrices $M_j$ for which the Lyapunov exponent may
be expressed in terms of special functions. 
One of these models is new and
has interesting connections with Sinai's study of diffusion in a
random environment~\cite{BCGL,Si}. 
Another justification for focusing on models involving generalised
random point scatterers is that--- far from being special--- they can on the contrary
be mapped onto the {\em most general product} of random matrices
in $\text{SL}(2,\mathbb{R})$. Hence such models
can in principle describe the whole of one-dimensional disordered quantum
mechanics.

The remainder of the paper is as follows: \S \ref{scattererSection}
describes the concept of generalised point scatterer, explains how
models of disorder may be constructed from them, and how completely general
products of matrices may be associated with such models.
We then list some known solvable cases.
\S \ref{susySection} is devoted to the analysis of a {\em new} example
where the Lyapunov exponent may be expressed in terms of the
generalised hypergeometric function. The paper ends in \S
\ref{conclusionSection} with a discussion of the limitations of the
Lyapunov exponent as a measure of localisation in disordered systems. 


\section{Generalised point scatterers and products of random matrices}
\label{scattererSection}

\subsection{Point scatterers and transfer matrices}

A  {\it point scatterer} is the idealised limit of a potential whose action is highly localised. The most familiar example is the delta-scatterer at a
point, say $x_j$. In the context of the Schr\"{o}dinger equation
 \begin{equation}
 - \psi''(x) + V(x)\, \psi(x) = E \,\psi(x)
   \,, \quad x > 0\,,
 \label{schroedingerEquation}
 \end{equation}
 this potential vanishes for every $x \ne x_j$ and is defined at
 $x=x_j$ via the boundary condition 
 \begin{equation}
 \left(\begin{array}{c}
 \phantom{k\,}\psi'(x_j+) \\
 \psi(x_j+)
 \end{array}\right)
 = \left(\begin{array}{cc}
 1 & u_j \\
 0 & 1
 \end{array}\right)
 \left(\begin{array}{c}
 \phantom{k\,}\psi'(x_j-) \\
 \psi(x_j-)
 \end{array}\right)
 \label{deltaScatterer}
 \end{equation}
 where $u_j$ is the strength of the scatterer, i.e. 
$V(x)=\sum_ju_j\,\delta(x-x_j)$. 
This delta--scatterer is generalised by replacing the $2 \times 2$
matrix on the right-hand side by an arbitrary 
matrix, say $B_j$. Thus 
\begin{equation}
 \left(\begin{array}{c}
 \psi'(x_j+) \\
  \psi(x_j+)
 \end{array}\right)
=B_j
 \left(\begin{array}{c}
 \psi'(x_j-) \\
  \psi(x_j-)
 \end{array}\right)\,.
 \label{generalisedScatterer}
 \end{equation}
 Conservation of the probability current requires
$B_j\in\text{SL}(2,{\mathbb R})$.
 If, instead of a single scatterer, we consider a sequence of scatterers placed at the points
 \begin{equation*}
  x_1 < x_2 < \cdots 
\end{equation*}
 then, assuming $E = k^2 > 0$, the solution of the Schr\"{o}dinger equation may be expressed in the form
 \begin{equation}
 \left(\begin{array}{c}
 \psi'(x_n+) \\
 \psi(x_n+)
 \end{array}\right)
= \Pi_n
 \left(\begin{array}{c}
 \psi'(x_1-) \\
  \psi(x_1-)
 \end{array}\right)\,.
\end{equation}
 where $\Pi_n$ is the product (\ref{product}) with
 \begin{equation}
 M_j = 
 \left(\begin{array}{cc}
 \sqrt{k} & 0 \\
 0 & \frac{1}{\sqrt{k}}
 \end{array}\right)
  \left(\begin{array}{cc}
 \cos\theta_j & -                 \sin\theta_j \\
 \sin\theta_j & \phantom{-}\cos\theta_j
 \end{array}\right)  \left(\begin{array}{cc}
 \frac{1}{\sqrt{k}} & 0 \\
 0 & \sqrt{k}
 \end{array}\right) \, B_j \,
\end{equation}
and
\begin{equation*}
\theta_j = k\,\ell_j\,,\;\;
  \ell_j:=x_{j+1}-x_{j}\,.
\end{equation*}
Now, by applying the Gram--Schmidt algorithm to the columns,  every $2 \times 2$ matrix $M$ with unit determinant may be expressed in the form
\begin{equation}
M = \left(\begin{array}{cc}
\cos \theta & -                   \sin \theta \\
\sin \theta &  \phantom{-} \cos \theta
\end{array}\right)
\left(\begin{array}{cc}
\e^w & 0 \\
0 & \e^{-w}
\end{array}\right)
\left(\begin{array}{cc}
1 & u \\
0 & 1
\end{array}\right)
\label{iwasawaDecomposition}
\end{equation}
for some real parameters $\theta$, $w$ and $u$. This is the Iwasawa
decomposition of $\text{SL}(2,{\mathbb R})$ into 
compact, Abelian and nilpotent subgroups. Therefore, by working with point scatterers of the form
\begin{equation}
B_j := \left(\begin{array}{cc}
\e^{w_j} & 0 \\
0 & \e^{-w_j}
\end{array}\right)
\left(\begin{array}{cc}
1 & u_j \\
0 & 1
\end{array}\right)
\label{doubleImpurity}
\end{equation}
and taking $E = k^2=1$, we obtain a correspondence between
general products of matrices $\Pi_n$ in $\text{SL}(2,{\mathbb R})$ 
and the Schr\"{o}dinger  
equation. The parameter $k$ may be easily reintroduced by simple
dimensional analysis.

\subsection{Product of random matrices---The Riccati variable}
\label{RiccatiSubsection}

There are several options in defining a disordered quantum system with point
scatterers. 
The randomness may be in the strength of the scatterer, i.e. in the matrix
$B_j$, or in the position of the scatterer, i.e. in the
coordinate $x_j$ or in {\em both} the strength and the position; see \cite{LGP} where various models 
are reviewed.

{\it We shall henceforth confine our attention to the particular case
  where the spacing between consecutive scatterers, i.e. the angle
    $\theta_j=k\ell_j$, is exponentially  distributed:}
\begin{equation}
  {\mathbb P} \left (\ell_j \in S \right ) 
  = p \int_{S \cap {\mathbb R}_+} \e^{-p \ell}\,\d \ell
  \label{eq:DistributionEllj}
\end{equation}
where $1/p>0$ is the mean spacing.
This is the situation that arises when impurities are dropped
uniformly on $\mathbb{R}$ with a mean density $p$.

The equation satisfied by the Riccati variable
\begin{equation}
Z(x) := \frac{\psi'(x)}{\psi(x)}
\label{riccatiVariable}
\end{equation}
is
\begin{equation}
Z'(x) = -E -Z^2(x) \quad \text{for $x \notin \{ x_j \}$}
\end{equation}
and
\begin{equation}
Z(x_j+) = {\mathcal B}_j \left ( Z(x_j-) \right ) \quad \text{for $j \in {\mathbb N}$} 
\end{equation}
where ${\mathcal B}_j$ is the linear fractional transformation associated with the matrix $B_j$, i.e.
\begin{equation}
{\mathcal B}_j(Z) := \e^{2 w_j} \left ( Z+u_j\right )\,.
\end{equation}
Because the spacing between consecutive scatterers is exponentially
distributed, $Z$ is a Markov process, and it was shown by Frisch and
Lloyd \cite{FL} (see also \cite{CTT1}) that the stationary density
$f(Z)$ satisfies 
\begin{equation}
\hspace{-1.5cm}
\frac{\d}{\d Z} \left [ (Z^2+E) f(Z) \right ] +p\,\int_{\text{SL}(2,{\mathbb R )}} \mu_B (\d B) \left [ f \left ( {\mathcal B}^{-1} (Z)  \right ) \frac{\d {\mathcal B}^{-1}(Z)}{\d Z} - f(Z)  \right ] = 0
\label{frischLloydEquation}
\end{equation}
where $\mu_B$ is the probability measure of the random matrix $B$.

The relationship between the Riccati variable and the Lyapunov exponent is easy to establish--- at least heuristically. A completely rigorous
treatment would follow the lines of Kotani's work \cite{Ko}.
Given the particular form of the upper-triangular matrix $B_j$, we have
\begin{eqnarray}
\nonumber
\ln \left | \psi (x_n+) \right | = - w_n + \ln \left | \psi (x_n-) \right |  \\
\nonumber
= -w_n + \int_{x_{n-1}}^{x_n} Z(x)\,\d x + \ln \left | \psi (x_{n-1}+) \right |  \\
= \cdots = - \sum_{j=1}^n w_j +  \int_{x_{1}}^{x_n} Z(x)\,\d x + \ln \left | \psi (x_{1}-) \right | \,.
\end{eqnarray}
Dividing by $x_n$ and letting $n \rightarrow \infty$, we obtain
\begin{equation}
\lim_{n \rightarrow \infty} \frac{\ln \left | \psi (x_n) \right |}{x_n}  = -\lim_{n \rightarrow \infty} \frac{1}{x_n} \sum_{j=1}^n w_j + \lim_{n \rightarrow \infty} \frac{1}{x_n} \int_{x_1}^{x_n} Z(x)\,\d x\,.
\end{equation}
Since 
\begin{equation*}
x_n-x_1 =  \sum_{j=1}^{n-1} \ell_j
\end{equation*}
where the $\ell_j$ are independent and identically distributed with mean $1/p$, the law of large numbers implies that
\begin{equation}
x_n \sim  n/p \quad \text{almost surely, as $n \rightarrow \infty$}\,.
\label{RelationXN}
\end{equation}
Then, by the ergodic theorem,
\begin{equation}
\gamma = - p\,{\mathbb E} (w) + \int_{\mathbb R} Z\, f(Z) \,\d Z
\label{riccatiFormula}
\end{equation}
almost surely, where $f$ is the density of the stationary distribution of the Riccati variable. The upshot is that the Lyapunov exponent of the system may in principle
be computed by solving the Frisch--Lloyd equation (\ref{frischLloydEquation}) for the stationary density, and then performing the integral
in Formula (\ref{riccatiFormula}).

\subsection{Correspondence with the Furstenberg theory}
\label{furstenbergSubsection}

It is instructive to compare this method of calculation with that
based on Furstenberg's theory \cite{Fu,GM}. 
The concept of direction, through the projective space $P \left ( {\mathbb R}^2 \right )$, plays a prominent part in that theory.
This is due to the fact that the product of matrices grows if and only if the columns of the product tend to align along a common direction.
In ${\mathbb R}^2$, a direction may be parametrised by a single number; let us choose the reciprocal $z \in {\mathbb R} \cup \{\infty\}$ of the slope. 
The columns of the product $\Pi_n$ have directions that are random.  When a vector of random direction, say $z$, is multiplied by a random matrix 
\begin{equation}
M = \left(\begin{array}{cc} 
a & b \\
c & d
\end{array}\right)
\end{equation}
a new vector is produced, whose direction is the random variable  
\begin{equation}
{\mathcal M}(z) := \frac{a z+b}{cz + d}\,.
\end{equation}
We say that the distribution, say $\nu_{\mu}(\d z)$, of $z$ is {\it invariant} under the action of matrices drawn from $\mu$ if the distributions of the old and the new directions are identical, i.e.
\begin{equation}
{\mathcal M} (z) \overset{\text{law}}{=} z \,.
\end{equation}
In particular, if $\nu_{\mu}(\d z)$ has a density, say $f_{\mu}(z)$, then this equality in law translates into the following equation for $f_{\mu}$:
\begin{equation}
f_{\mu}(z) = \int_{\text{SL} \left ( 2, {\mathbb R} \right )}   \mu (\d M ) \left ( f_{\mu} \circ {\mathcal M}^{-1} \right ) (z)
\frac{\d {\mathcal M}^{-1}}{\d z}(z)\,.
\label{furstenbergEquation}
\end{equation}
where, with some abuse of notation, $z$ is no longer random. Knowing $f_{\mu}$, the Lyapunov exponent $\gamma_{\mu}$ of the product $\Pi_n$
may then be computed by the formula
\begin{equation}
\gamma_{\mu} = \int_{{\mathbb R}} \d z f_{\mu} ( z ) \int_{\text{SL} \left (2,{\mathbb R} \right )}  \mu (\d M) \ln \frac{\left | M \left(\begin{array}{c} z \\ 1 \end{array}\right) \right |}{ \left |  
\left(\begin{array}{c} z \\ 1 \end{array}\right) \right |} \,.
\label{furstenbergFormula}
\end{equation}

The calculation of $\gamma_{\mu}$ when $M$ is of the form (\ref{iwasawaDecomposition}) and $\theta$ is exponentially distributed with mean $1/p$, independent
of $w$ and $u$, may then be related to the calculation of $\gamma$  in the previous subsection as follows:
\begin{enumerate}
\item The projective variable $z$ is the Riccati variable $Z$ with $E=k^2=1$.
\item The Furstenberg equation (\ref{furstenbergEquation}) for the invariant density reduces to the Frisch--Lloyd equation (\ref{frischLloydEquation})
for the stationary density.
\item The Furstenberg formula (\ref{furstenbergFormula})  for $\gamma_{\mu}$ reduces to Formula (\ref{riccatiFormula}) for $\gamma$ and
  \begin{equation}
\gamma_{\mu} = \frac{1}{p}\,\gamma\,.
\end{equation}
The factor $1/p$ is due to the difference between the definitions:
whereas (\ref{gamma}) involves the distance variable $x$, the growth of the matrix product is measured with respect to the index $n$ in (\ref{gammaMu})--- $x$ and $n$ being related by 
(\ref{RelationXN}).
\end{enumerate}

\subsection{Known solvable cases involving random point-scatterers}
\label{exampleSubsection}

There is no systematic method for solving the integro-differential
equation (\ref{frischLloydEquation}), but some solvable cases have been found and are listed in
Table \ref{tab:SolvableCases}.  All but the first of these cases have in common that one or both the
parameters $w$ and $u$ in the expression for the matrix $B$ are
exponentially or gamma distributed. The density of the exponential and gamma distributions 
satisfies a differential equation with {\em constant} coefficients,
and this leads
to a trick for reducing
(\ref{frischLloydEquation}) to a purely differential 
form. This trick will be illustrated in the next section.

Here, we consider the simplest example and merely write down the result.
This example consists of taking
\begin{equation}
B = \left(\begin{array}{cc}
1 & u \\
0 & 1
\end{array}\right)
\end{equation}
where $u$ is exponentially distributed with mean $1/q$. Then the stationary density solves
\begin{equation}
\frac{\d}{\d z} \left [ \left ( z^2+k^2 \right ) f(z) \right ] - p\, f(z) + q  \left ( z^2+k^2 \right ) f(z) = q N\,.
\end{equation}
The constant $N=N(E)$ appearing on the right-hand side is the {\it
  integrated density of states} per unit length of the quantum model; the {\it Rice formula}
\begin{equation}
N(E)  = \lim_{|z| \rightarrow \infty} z^2 f(z)
\label{riceFormula}
\end{equation}
expresses its relationship
to the tail of the stationary density. After integration, we obtain
\begin{equation*}
f(z) = \frac{q N}{z^2+k^2} \exp \left [- q z + \frac{p}{k}\arctan \frac{z}{k}  \right ] \int_{-\infty}^z \exp \left [q t - \frac{p}{k} \arctan \frac{t}{k}  \right ] \,\d t\,.
\end{equation*}
The integrated density of states is then determined by the requirement that $f$ be a probability density. The final formulae for $N$ and $\gamma$
may be expressed neatly via the  characteristic function (\ref{characteristicFunction}) of the disordered system:
for $E =k^2>0$,
\begin{equation}
\Omega (E) = 2 \i k\, \frac{W_{\frac{-\i p}{2 k},\frac{1}{2}}' \left ( - 2 \i k q \right )}{W_{\frac{-\i p}{2 k},\frac{1}{2}}\left ( - 2 \i k q \right )}
\label{nieuwenhuizenFormula}
\end{equation}
where $W_{\alpha,\beta}$ is a Whittaker function~\cite{GR}.
This formula, which was discovered by Nieuwenhuizen \cite{Ni} using a different approach,
corresponds here to the case of ``repulsive'' scatterers ($u > 0$); it 
is easily adapted to the
``attractive'' case $u<0$ by simply changing the sign of $q$ 
in Eq.~(\ref{nieuwenhuizenFormula}) \cite{CTT1}.

\begin{table}[!ht]
  \centering
  \begin{tabular}{lllll}
      Density of $\ell$ & Density of $u$ & Density of $w$ & Reference & Special function\\
      \hline
      $\delta(\ell- 1/p)$ 
           & $\frac{q/\pi}{q^2+u^2}$ 
           & $\delta(w)$ 
           & \cite{Is} 
           & \\
      $p\,\e^{-p\ell}\,\mathbf{1}_{\mathbb{R}_+}(\ell)$
           & $q\,\e^{-qu}\,\mathbf{1}_{\mathbb{R}_+}(u)$  
           & $\delta(w)$ 
           & \cite{Ni,CTT1} 
           & Whittaker \\
      $p\,\e^{-p\ell}\,\mathbf{1}_{\mathbb{R}_+}(\ell)$
           & $q^2 u\,\e^{-qu}\,\mathbf{1}_{\mathbb{R}_+}(u)$
           & $\delta(w)$ 
           & \cite{Ni,CTT1}
           & Whittaker \\
      $p\,\e^{-p\ell}\,\mathbf{1}_{\mathbb{R}_+}(\ell)$
           & $\delta(u)$ 
           & $q\,\e^{-qw}\,\mathbf{1}_{\mathbb{R}_+}(w)$  
            & \cite{CTT1}
            &Hypergeometric\\
      $p\,\e^{-p\ell}\,\mathbf{1}_{\mathbb{R}_+}(\ell)$
           & $\delta(u)$ 
           & $q\,\e^{-2q|w|}$  
           & this article
           & generalized Hypergeometric\\
      \hline
  \end{tabular}
  \caption{\it Known solvable cases involving random point scatterers.
  The notation $\mathbf{1}_{A}(x)$ is used for the function that equals $1$ if $x\in A$ and $0$ otherwise.}
  \label{tab:SolvableCases}
\end{table}
Besides the Lyapunov exponent, the low-energy behaviour of the
density of states is also of interest; it has been analysed for point scatterers with random
positions in the following cases:
({\it i}) when $w=0$ and the probability density of $u$ is supported on the positive half-line but is otherwise arbitrary \cite{Ko,LGP};
({\it ii}) in the converse situation where $u=0$ and it is $w>0$ that is random~\cite{CTT2}.

\subsection{Continuum limit and models involving Gaussian white noises
or L\'evy noises}

No exact solution is known in the case where {\em all three} parameters
$\ell = \theta$, $w$ and $u$ are random.
Some insight into the general case may however be gained by considering the continuum
limit of an infinitely dense set of point scatterers with vanishingly weak strengths, i.e.
\begin{equation*}
\theta\,,\;w\,,\;u \rightarrow 0\,.
\end{equation*}
This limit was studied very recently in Ref.~\cite{ComLucTexTou12} in the most general case where the three parameters of the Iwasawa
decomposition (\ref{iwasawaDecomposition}) are correlated--- the
matrices $M_j$ being still mutually independent.
This recent work corresponds to a generalisation of the study of the
Schr\"odinger equation with a potential 
$
V(x)=\eta(x)+W(x)^2-W'(x)
$
that combines two (possibly correlated) Gaussian white
noises $\eta$ and $W$~\cite{HagTex08}; see
Ref.~\cite{ComLucTexTou12} for a detailed discussion. 

Finally let us mention a generalisation of the model in another
direction: by letting the density of
scatterers tend to infinity, it is possible to
study certain singular limiting measures for the strength of the scatterers.
Such cases may be described in terms of L\'evy noises and were studied in \cite{CTT2,Ko}.


\section{Supersymmetry: a new solvable case}
\label{susySection}

In this section, we shall restrict our attention to the particular
case where the point scatterers are described by
transfer matrices of the form
\begin{equation}
  B_j = \left(\begin{array}{cc}
  \e^{w_j} & 0 \\
  0 & \e^{-w_j}
  \end{array}\right)
  \,.
  \label{eq:SusyScatterer}
\end{equation}
As shown in Ref.~\cite{CTT1}, 
this corresponds to considering the potential 
\begin{equation}
V = W^2 - W'
\label{supersymmetricPotential}
\end{equation}
where $W$ is a superposition of delta-scatterers
\begin{equation*}
W(x)=\sum_jw_j\,\delta(x-x_j)\,.
\end{equation*}
A potential of the form (\ref{supersymmetricPotential}) is said to be {\it supersymmetric} with {\it superpotential} $W$.
Hence we shall refer to the point scatterer associated with (\ref{eq:SusyScatterer}) as ``supersymmetric''.
This case has many
interesting features: the corresponding Hamiltonian is factorisable and, when the superpotential is deterministic and has the so-called ``shape-invariance'' property,
the discrete spectrum may be obtained exactly~\cite{CKS}. 

The disordered case is of interest 
in several physical contexts \cite{ComTex98}: in
particular, the Schr\"odinger equation can be mapped onto the Dirac
equation with a 
random mass--- a model relevant in several contexts of condensed matter
physics; see the reviews \cite{ComTex98,Gog82,LGP} and the introduction in
Ref.~\cite{TexHag10}. 
The model is also closely 
connected to
diffusion in a random environment \cite{BCGL}; indeed, if we set
\begin{equation*}
\psi (x) = \exp \left [ -\int W(x) \,\d x \right ]\, U(x) 
\end{equation*}
then
\begin{equation}
\frac{1}{2} \,U''(x) - W(x) \,U'(x) = \lambda \,U (x) \quad
\text{with}\;\;\lambda 
= -\frac{E}{2}\,.
\label{infinitesimalGenerator}
\end{equation}
When $E<0$, this is the equation satisfied by the Laplace transform of a {\it hitting time} of the diffusion in the environment $\int W(x) \,\d x$; see for instance~\cite{Ca}.

After integrating with respect to $z$, the Frisch--Lloyd equation
(\ref{frischLloydEquation}) for the stationary density becomes 
\begin{equation}
\left ( z^2 + E \right ) f(z) + p \int_{\mathbb R} \d w \,\varrho (w) \int_{z}^{z\e^{-2 w}} f(t)\,\d t = N(E) 
\label{supersymmetricFrischLloyd}
\end{equation}
where $\varrho$ is the density of the random variable $w$.  Our strategy for computing the Lyapunov exponent will make use of three simple observations: 
firstly,
the equivalence between the supersymmetric Schr\"{o}dinger equation and Equation (\ref{infinitesimalGenerator}) implies that
the spectrum is necessarily contained in ${\mathbb R}_+$. It follows
that $N(E) =0$ for $E<0$ and that the stationary density $f$
is supported on the positive half-line; this facilitates the calculation of the Lyapunov exponent. The second observation is that
the characteristic function (\ref{characteristicFunction})
is an {\it analytic function of $E$}, except for a branch cut along the positive half-axis. Therefore, if we find an analytical formula
for the Lyapunov exponent when $E<0$, analytic continuation will furnish a formula for the case $E>0$. The third observation ---which we alluded to earlier and whose exploitation goes back to the
works of Nieuwenhuizen \cite{Ni} and Gjessing and Paulsen \cite{GP}--- is that the integral term appearing in (\ref{supersymmetricFrischLloyd})
may be eliminated if $\varrho$ solves a linear differential equation with (piecewise) constant coefficients.

With these points in mind, let us set
\begin{equation*}
E = -k^2<0
\end{equation*}
and take
\begin{equation}
\varrho(w) = q \,e^{-2q |w|}\,.
\label{laplaceDensity}
\end{equation}
If we write 
$f_{p,q,k}$ to indicate the dependence of the stationary density $f$ on the parameters $p$, $q$ and $k$, then we have the following identity:
\begin{equation}
f_{p,q,k} (z) = \frac{1}{k}\,f_{\frac{p}{k},q,1} \left ( \frac{z}{k} \right )\,.
\label{negativeEnergyIdentity}
\end{equation}
It is therefore sufficient to consider the case $k=1$, i.e. $E=-1$.
It is then easily deduced from the evenness of $\varrho$ that
\begin{equation}
f(z) = \frac{1}{z^2} f(1/z)\,.
\label{involutionProperty}
\end{equation}
As a consequence, to know $f(z)$ for $z\geq1$ is to know $f(z)$ for $0<z<1$, and vice-versa; we shall at times implicitly make use of this helpful
property.

\subsection{Reduction to a differential equation}
\label{reductionSubsection}
Following the recipe outlined in \cite{CTT1}, we shall now deduce from the Frisch--Lloyd equation (\ref{supersymmetricFrischLloyd}) a {\it differential}
equation for the density $f$. To this end, we introduce the kernel
\begin{equation}
K (y) = -\text{sign}(y) \,\frac{1}{2} \e^{-2q |y|}
\end{equation}
and note, for future reference, that $K$ satisfies
\begin{equation}
K'(y) = - \text{sign}(y) \,2 q\, K(y)\,, \;\; y \ne 0\,,
\label{kernelDifferentialEquation}
\end{equation}
subject to
\begin{equation}
K(0 \pm) = \mp \frac{1}{2}\,.
\label{kernelConditions}
\end{equation}
After permuting the order of integration in the Frisch--Lloyd equation, we find
\begin{equation}
(z^2-1) f(z) + p \int_0^\infty K \left ( \frac{1}{2} \ln \frac{z}{t} \right ) f(t)\,\d t = 0\,.
\label{permutedFrischLloydEquation}
\end{equation}
Set
\begin{equation*}
\varphi(z) := (z^2-1) f(z)
\end{equation*}
and write
$$
\int_0^\infty K \left ( \frac{1}{2} \ln \frac{z}{t} \right ) f(t)\,\d t = \int_0^z K \left ( \frac{1}{2} \ln \frac{z}{t} \right ) f(t)\,\d t +
\int_z^\infty K \left ( \frac{1}{2} \ln \frac{z}{t} \right ) f(t)\,\d t\,.
$$
Differentiation of (\ref{permutedFrischLloydEquation}) with respect to $z$ then yields:
\begin{eqnarray}
\nonumber
\varphi'(z) + p \,K(0+) f(z) - p \,\frac{q}{z} \,\int_0^z K \left ( \frac{1}{2} \ln \frac{z}{t} \right ) \,f(t)\,\d t \\
- p \,K(0-) f(z) + p \,\frac{q}{z} \,\int_z^\infty K \left ( \frac{1}{2} \ln \frac{z}{t} \right ) \,f(t)\,\d t 
= 0\,.
\end{eqnarray}
Hence, in view of (\ref{kernelConditions}),
$$
z\,\left [ \varphi'(z) - p \,f(z) \right ] - p \,q \,\int_0^z K \left ( \frac{1}{2} \ln \frac{z}{t} \right ) \,f(t)\,\d t 
+ p \,q\,\int_z^\infty K \left ( \frac{1}{2} \ln \frac{z}{t} \right ) \,f(t)\,\d t 
= 0\,.
$$
Another differentiation leads, after use of (\ref{kernelConditions}), to
\begin{equation}
\frac{\d}{\d z} \left \{ z\,\left [ \varphi'(z) - p \,f(z) \right ] \right \} + p\,\frac{q^2}{z} \,\int_0^\infty 
K \left ( \frac{1}{2} \ln \frac{z}{t} \right ) \,f(t)\,\d t = 0\,.
\end{equation}
The integral term may then be eliminated by making use of Equation (\ref{permutedFrischLloydEquation}); the result is the second-order
linear differential equation
\begin{equation}
z \frac{\d}{\d z} \left [ z\,\left ( \varphi' - p \,\frac{\varphi}{z^2-1} \right ) \right ] -q^2 \varphi = 0\,.
\label{differentialEquation}
\end{equation}

\subsection{The case $q=1$}
\label{qIsOneSubsection}

It may be verified by direct substitution that, when $q=1$, the differential equation (\ref{differentialEquation}), expressed
in terms of the unknown $f$, has the solution  
\begin{equation}
  f_{p,1,1}(z) = \frac{A}{p-2} \, 
  \frac{1}{z} 
  \left [ 
     1- \frac{1+p z + z^2}{|z-1|^{1-\frac p2}(z+1) ^{1+\frac p2}} 
  \right ]\,
\label{qIsOneDensityFormula}
\end{equation}
for $p\neq2$. 
The normalisation constant $A$ satisfies
\begin{eqnarray}
\nonumber
1 = \frac{2A}{p-2} \int_0^1 \left  [ 1- \left ( \frac{1-z}{1+z} \right
  )^{\frac{p}{2}} \frac{1+p z + z^2}{1-z^2} \right ] \,\frac{\d z}{z}
\\
\overset{\underset{\downarrow}{z=\frac{1-x}{1+x}}}{=}
\frac{2A}{p-2} \int_0^1 \left [ 2 \frac{1-x^{\frac{p}{2}-1}}{1-x^2} + \left (1-\frac{p}{2} \right ) x^{\frac{p}{2}-1} \right ]\,\d x \\
= \frac{2A}{p-2} \left \{ 2 \int_0^1  \frac{1-x^{\frac{p}{2}-1}}{1-x^2} \,\d x + \frac{2}{p} \left ( 1-\frac{p}{2} \right ) \right \}
\,.
\end{eqnarray}
The value of this last integral is given explicitly by \cite{GR}, \S 3.269, in terms of the digamma function $\digamma$. Hence
\begin{equation}
\frac{1}{A} = \frac{1}{2} \frac{\digamma (p/4)-\digamma(1/2)}{p/4-1/2} - \frac{2}{p} \,.
\label{denominator}
\end{equation}
The case $p=2$ may be studied by letting $p\to2$ in
 Equations (\ref{qIsOneDensityFormula},\ref{denominator}). We get
 \begin{equation}
 f_{2,1,1}(z)=A
   \left[
    \frac1{2z} \ln\left|\frac{z+1}{z-1}\right| - \frac1{(z+1)^2}
  \right]
  \hspace{0.5cm}\mbox{and}\hspace{0.5cm}
  \frac {1}{A} = \frac{\pi^2}{4}-1\:.
\end{equation}
Plots of the density (\ref{qIsOneDensityFormula}) for several values of
$p$ are shown in Figure~\ref{fig:densities}. 
\begin{figure}[!ht]
  \centering
  \includegraphics[scale=1]{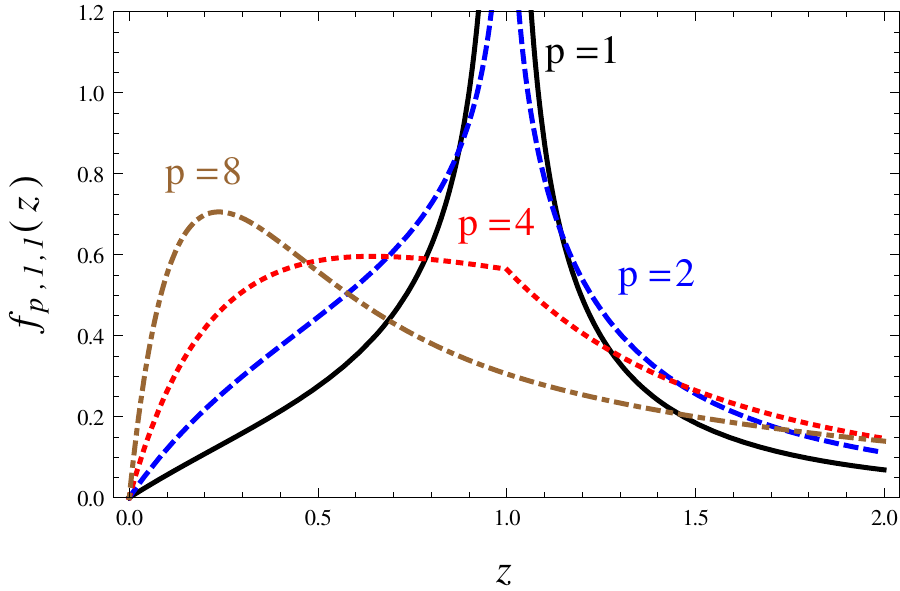}
  \caption{\it Plots of the stationary probability density $f(z)$
    for $q=1$ and energy $E=-1$.}
\label{fig:densities}
\end{figure}
The integral on the right-hand side of Formula (\ref{riccatiFormula}) may also be computed exactly; the end result is
\begin{equation}
  \Omega (-k^2) = \frac p2\,
  \frac{1 -\frac{4 k^2}{p^2}}
          {\digamma\left(\frac{p}{4k}\right)  -
            \digamma\left(\frac{1}{2}\right) +\frac{2k}{p} -1}\,.
\label{negativeEnergyCharacteristicFunction}
\end{equation}
Analytic continuation to positive energies $E=k^2>0$ consists of replacing $k$ by $-\i k$; this yields 
\begin{equation}
  \Omega(E + \i 0+) = \frac p2\,
  \frac{1 +\frac{4 k^2}{p^2}}
          {\digamma\left(\frac{\i p}{4k}\right)  -
            \digamma\left(\frac{1}{2}\right) -\frac{2\i k}{p} -1}\,.
  \label{positiveEnergyCharacteristicFunction}
\end{equation}
Various limits may be analysed with the help of these expressions.
Using 
\begin{equation*}
\digamma(z)\underset{z\to0}{=}-\frac1z-\mathbf{C}+\frac{\pi^2}{6}z+\mathcal{O}(z^2)
\end{equation*}
where $\mathbf{C} = 0.577 \ldots$ is the Euler-Mascheroni constant,
we deduce from (\ref{negativeEnergyCharacteristicFunction})
\begin{equation}
\Omega (-k^2)
  \underset{k\to\infty}{=}k+p\,\left (\ln2-\frac12 \right )+\mathcal{O}(k^{-1})\,.
\end{equation}
Analytic continuation then shows that $N(k^2)\sim k/\pi$
in the limit  $k\to\infty$, as expected
from the free case. Also,  the high-energy Lyapunov exponent
tends to a constant: $\gamma(k^2)\sim p\,(\ln2-1/2)$; see Fig.~\ref{fig:IDosLyap}.

In the low-energy limit, by using
\begin{equation*}
\digamma(z)\underset{z\to\infty}{=}\ln z-\frac1{2z}+\mathcal{O}(z^{-2})\,,
\end{equation*}
we obtain 
\begin{equation}
\Omega (-k^2) \underset{k\to0}{=}
  \frac{p/2}{\ln(\frac{p}{k}) +\mathbf{C}-1}
  +\mathcal{O}\left(\frac{k^2}{\ln k}\right)
  \:.
\end{equation}
Analytic continuation to positive energies allows one to recover the
Dyson singularity of the integrated density of states and the corresponding vanishing of the
Lyapunov exponent:
\begin{equation*}
  N(k^2)\sim\frac{g}{2\ln^2k}\;\;\text{and}\;\;\gamma(k^2)\sim\frac{g}{|\ln k|}
\end{equation*}
with $g=p/2$. 
In this $k\to0$ limit, we thus recover as expected the same results as
in the case where the superpotential $W$ is a Gaussian white noise;
the characteristic function for this case
is recalled below.
The fact that $\gamma \rightarrow 0+$ as $E \rightarrow 0+$
means that the eigenstates become delocalised (or extended) in that
limit; see the discussion in Section~\ref{conclusionSection}.
Plots of $\gamma$ and $N$ for positive energy are shown in Figure~\ref{fig:IDosLyap}. 
\begin{figure}[!ht]
  \centering
  \includegraphics[scale=0.825]{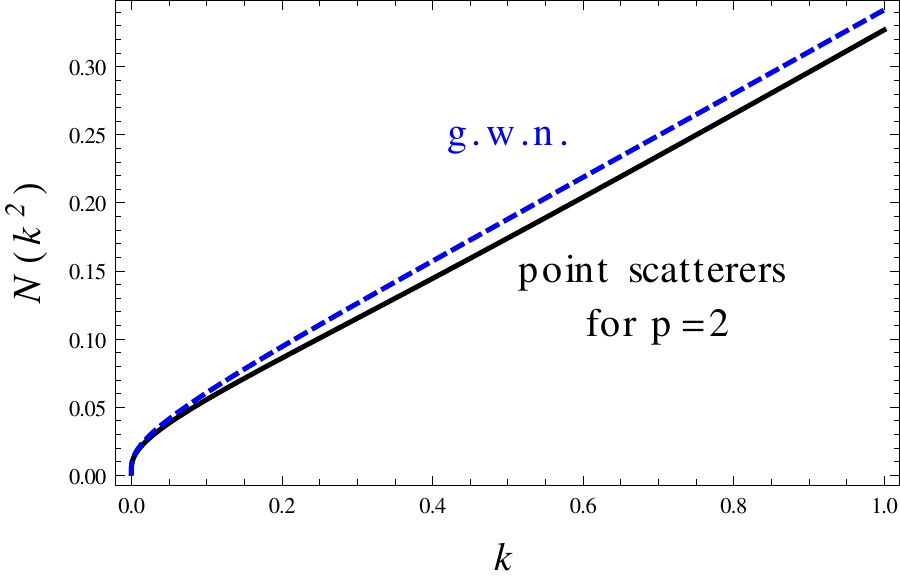}
  \hspace{0.25cm}
  \includegraphics[scale=0.825]{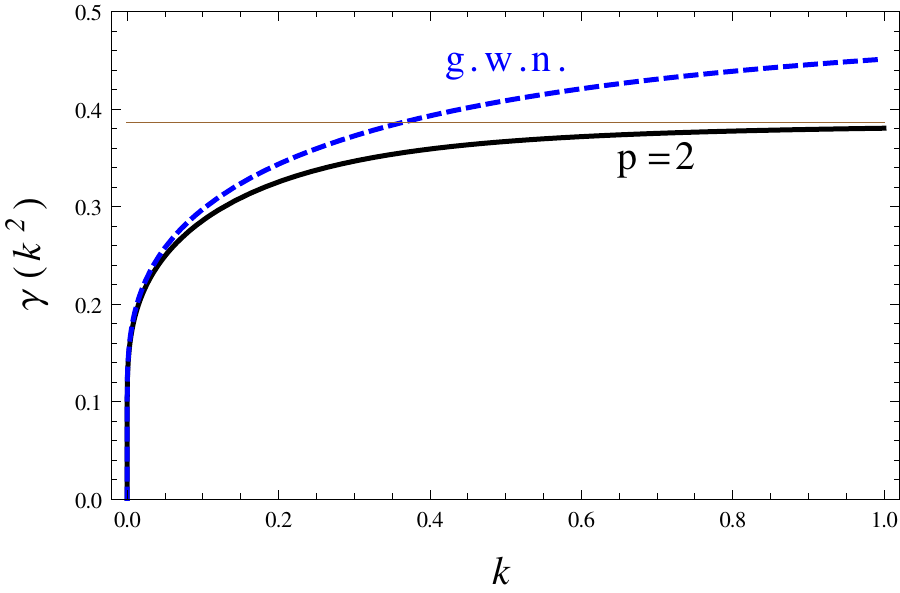}
\caption{\it Plots of $N$ and  $\gamma$ against $k=\sqrt{E}$ for $p=2$
  and $q=1$. The
  integrated density of states exhibits the Dyson singularity. For comparison,
 the results corresponding to the case where $W$ is a Gaussian white noise
  of strength $g=p/2q^2=1$ are also shown as blue dashed curves. The
  thin line is the asymptote of the Lyapunov exponent.}
\label{fig:IDosLyap} 
\end{figure}

\subsection{The characteristic function for arbitrary $q>0$}
\label{arbitrarySubsection}

Although the calculations are somewhat tedious, the more general case $q >0$ is also tractable. The results are summarised in the
following 
\begin{theorem}
The invariant probability density is given by
\begin{equation*}
f_{p,q,1}(z) = \frac{C}{z} \int_0^\infty \e^{-\frac{x}{4} \left (z
    +1/z \right )} M_{\frac{p}{4},q}(x) \,\frac{\d x}{\sqrt{x}} \,.
\end{equation*}
where $C$ is a normalisation constant and $M_{\alpha,\beta}$ is a Whittaker function. 
\label{whittakerTheorem}
\end{theorem}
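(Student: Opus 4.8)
The plan is to solve the second-order equation (\ref{differentialEquation})---equivalently its form written directly as an ordinary differential equation for $f$ with polynomial coefficients---by seeking a solution in the form of the integral transform
\[
f_{p,q,1}(z) = \frac{C}{z}\int_0^\infty \e^{-\frac{x}{4}\left(z+1/z\right)}\,M(x)\,\frac{\d x}{\sqrt{x}}
\]
with an unknown profile $M$. Two features motivate this ansatz. First, the kernel $\e^{-\frac{x}{4}(z+1/z)}$ is invariant under $z\mapsto 1/z$, so the $1/z$ prefactor makes the involution property (\ref{involutionProperty}) hold automatically: replacing $z$ by $1/z$ and multiplying by $z^{-2}$ returns the same expression. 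Second, for $z>0$ the kernel decays exponentially in $x$, which will be needed to reproduce the vanishing of the integrated density of states ($N=0$ for $E<0$) through the Rice formula (\ref{riceFormula}).

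The heart of the proof is to show that this transform intertwines the $z$-operator of (\ref{differentialEquation}) with the Whittaker operator in the variable $x$. I would insert the ansatz and carry the $z$-derivatives onto the kernel $\Phi(x,z):=\e^{-\frac{x}{4}(z+1/z)}$. Using the elementary identities $\partial_x\Phi=-\frac14(z+1/z)\,\Phi$ and $\partial_x^2\Phi=\frac{1}{16}(z+1/z)^2\,\Phi$, the key observation is that, after dividing out the common factor $(z^2-1)/z$, \emph{all} of the remaining $z$-dependence of the integrand organises itself into a function of the single combination $s=z+1/z$. Consequently the $z$-operator applied to $\Phi$ can be rewritten exactly as $x^2\partial_x^2\Phi+3x\partial_x\Phi+\left(-\frac{x^2}{4}+\frac{p}{4}x+1-q^2\right)\Phi$.

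It then remains to integrate by parts twice in $x$, transferring these derivatives onto $M(x)\,x^{-1/2}$. The resulting requirement on $M$ is precisely the Whittaker equation
\[
M''+\left(-\frac14+\frac{p/4}{x}+\frac{1/4-q^2}{x^2}\right)M=0\,,
\]
whose solution regular at the origin is $M_{p/4,q}$. One must check that the boundary terms produced by the integrations by parts vanish: at $x=0$ this follows from $M_{p/4,q}(x)\sim x^{q+1/2}$ together with $q>0$, and at $x=\infty$ from the net decay $\e^{-\frac{x}{4}(z+1/z-2)}$, valid for every $z>0$ with $z\neq1$ since $z+1/z>2$ there. The second (irregular) solution of the Whittaker equation is discarded because it fails to yield a nonnegative, normalisable density; the choice $M_{p/4,q}$ is further confirmed by requiring agreement with the explicit $q=1$ result (\ref{qIsOneDensityFormula}) and by the fact that the small-$x$ behaviour $M_{p/4,q}(x)\sim x^{q+1/2}$ makes the large-$z$ tail of $f$ decay faster than $z^{-2}$, consistent with $N=0$.

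I expect the main obstacle to be the bookkeeping in the intertwining step: verifying that the entire $z$-dependence collapses into a function of $s=z+1/z$ is what forces the precise exponents in the kernel (the factor $x/4$ and the weight $x^{-1/2}$) and the $1/z$ prefactor, and a single algebraic slip there would spoil the reduction to the Whittaker equation. A secondary point, once the ansatz is known to satisfy the equation, is to argue that it is genuinely \emph{the} stationary density: since (\ref{differentialEquation}) is second order, its solution space is two-dimensional, and one must invoke nonnegativity, normalisability and the involution symmetry to single out the integral solution. The normalisation constant $C$ is then fixed by the condition that $f$ integrate to one, which I would leave implicit as in the statement.
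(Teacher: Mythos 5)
Your argument is correct --- I checked the intertwining step: writing $s=z+1/z$, one finds $z\bigl(\varphi'-p\varphi/(z^2-1)\bigr)=C\int_0^\infty\bigl(s+x-\tfrac{xs^2}{4}-p\bigr)\e^{-xs/4}M(x)x^{-1/2}\,\d x$, and since $z\,\d/\d z$ acts on functions of $s$ as $(z-1/z)\partial_s$ while $s\Phi=-4\partial_x\Phi$ and $s^2\Phi=16\partial_x^2\Phi$, the operator in (\ref{differentialEquation}) does collapse onto $x^2\partial_x^2\Phi+3x\partial_x\Phi+\bigl(-\tfrac{x^2}{4}+\tfrac{p}{4}x+1-q^2\bigr)\Phi$; two integrations by parts against $M(x)x^{-1/2}$ then yield precisely the Whittaker equation with parameters $(p/4,q)$, and the boundary terms vanish as you say. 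But this is a genuinely different route from the paper's. The paper does not guess the integral transform: it substitutes $\varphi=z^q\bigl(\tfrac{1-z}{1+z}\bigr)^{p/2}y$ to reduce (\ref{differentialEquation}) to a Heun equation (\ref{reducedEquation}), identifies the solution regular at $z=0$ as a Heun function, converts it to Gauss hypergeometric form by series comparison and a contiguity relation, passes to an associated Legendre function via (\ref{legendreFormula}), and only then invokes a tabulated integral representation of $Q^\mu_\nu$ in terms of $M_{\alpha,\beta}$. Your derivation is shorter and more self-contained (no appeal to Gradshteyn--Ryzhik tables), and it explains \emph{why} the kernel $\e^{-\frac{x}{4}(z+1/z)}/(z\sqrt{x})$ is the right one; the paper's longer chain has the compensating benefit of producing the intermediate closed forms (\ref{heunDensity}) and (\ref{hypergeometricFormula}), which are used elsewhere (the explicit $q=1$ density, the elementary expressions for $q\in\mathbb{N}$, and the positivity of $f$ via the Heun recurrence). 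One refinement to your selection argument: the two indices of (\ref{differentialEquation}) at $z=0$ are $\pm q$, so the rejected solution behaves like $z^{-q}$ there and is still normalisable when $q<1$; the clean way to discard it is the boundary condition $f(0+)=0$ obtained by letting $z\to0$ in the integrated Frisch--Lloyd equation (\ref{permutedFrischLloydEquation}), which is the condition the regular (Heun/Whittaker) solution satisfies.
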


\begin{corollary}
For $E = k^2 >0$.
\begin{equation*}
\Omega(E + \i 0+) = -\i  k \frac{q+1}{q} 
\frac{{_3}F_2 \left ( q+\frac{1}{2}-\frac{\i  p}{4k},q+2,q;\,2q+1,q+\frac{3}{2};\,1 \right )}{{_3}F_2 \left ( q+\frac{1}{2}-\frac{\i  p}{4k},q+1,q+1;\,2q+1,q+\frac{3}{2};\,1 \right )}\,.
\end{equation*}
\label{whittakerCorollary}
\end{corollary}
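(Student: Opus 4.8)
The plan is to take the integral representation of Theorem~\ref{whittakerTheorem} as given and to evaluate the characteristic function directly from its definition, first at negative energy and then by analytic continuation to $E>0$.

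I would begin at $E=-k^2<0$, where the text has already established that $N(-k^2)=0$ and that $f$ is supported on $(0,\infty)$. Since $\varrho(w)=q\,\e^{-2q|w|}$ is even one has $\mathbb{E}(w)=0$, so Formula~(\ref{riccatiFormula}) collapses to $\Omega(-k^2)=\gamma(-k^2)=\int_0^\infty z\,f_{p,q,k}(z)\,\d z$. The scaling identity~(\ref{negativeEnergyIdentity}) then lets me set $k=1$ inside the density at the cost of an overall factor,
\begin{equation*}
\Omega(-k^2)=k\int_0^\infty u\,f_{p/k,q,1}(u)\,\d u\,,
\end{equation*}
so that the lone surviving parameter of the Whittaker function is $\kappa:=p/(4k)$.

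Next I would insert the representation of Theorem~\ref{whittakerTheorem} and interchange the two integrations; this is legitimate because $M_{\kappa,q}(x)\sim x^{q+1/2}$ as $x\to0$ and the kernel decays exponentially for fixed $z>0$. The inner $z$-integrals are Macdonald functions through $\int_0^\infty z^{\nu-1}\e^{-\frac{x}{4}(z+1/z)}\,\d z=2K_\nu(x/2)$: the numerator $\int u\,f\,\d u$ produces $K_1$ and the normalising integral $\int f\,\d u$ produces $K_0$, the constant $C$ cancelling in the ratio, so that
\begin{equation*}
\Omega(-k^2)=k\,\frac{\displaystyle\int_0^\infty x^{-1/2}\,M_{\kappa,q}(x)\,K_1(x/2)\,\d x}{\displaystyle\int_0^\infty x^{-1/2}\,M_{\kappa,q}(x)\,K_0(x/2)\,\d x}\,.
\end{equation*}
Each of these Whittaker--Macdonald integrals can then be evaluated by the tabulated formula for $\int_0^\infty x^{s-1}M_{\kappa,\mu}(x)K_\nu(ax)\,\d x$ \cite{GR}; at our values $s=1/2$, $a=1/2$, $\mu=q$ the integrals collapse to ${_3}F_2$'s at unit argument, and the prefactor $(q+1)/q$ together with the precise upper and lower parameters emerges from this reduction. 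Observe that the confluent parameter $\mu-\kappa+\frac{1}{2}=q+\frac{1}{2}-\kappa$ of $M_{\kappa,q}$ is the first entry of both ${_3}F_2$'s, while the differing orders $\nu=1,0$ account for the $(q+2,q)$ versus $(q+1,q+1)$ patterns.

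Finally I would continue analytically from $E<0$ to $E>0$ via $k\mapsto-\i k$, which the text justifies by the analyticity of $\Omega$ off the positive real axis. This carries $\kappa=p/(4k)$ into $\i p/(4k)$, hence the first parameter into $q+\frac{1}{2}-\frac{\i p}{4k}$, and the prefactor $k$ into $-\i k$; the remaining ($k$-independent) parameters are unchanged, producing exactly the stated expression for $\Omega(E+\i0+)$. The main obstacle is the evaluation of the two Whittaker--Macdonald integrals and the bookkeeping required to match their output to the ${_3}F_2$'s with precisely the displayed parameter strings; by comparison the interchange of integrals and the analytic continuation are routine.
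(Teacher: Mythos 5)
Your proposal follows essentially the same route as the paper: it arrives at exactly the paper's intermediate representation $\Omega(-k^2)=k\,\bigl(\int_0^\infty x^{-1/2}M_{\frac{p}{4k},q}(x)K_1(x/2)\,\d x\bigr)\big/\bigl(\int_0^\infty x^{-1/2}M_{\frac{p}{4k},q}(x)K_0(x/2)\,\d x\bigr)$ (the paper phrases this via the Mellin transform $\hat f_{p,q,1}(s)$ at general $s$, converting $K_s(x/2)$ to $\sqrt{\pi/x}\,W_{0,s}(x)$ before invoking the Gradshteyn--Ryzhik entry \S 7.625(1)), then reduces the integrals to ${_3}F_2$'s at unit argument and continues analytically via $k\mapsto-\i k$. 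The only point to note is that the table formula is applied outside its stated range of validity ($\alpha=-1$, $\beta=1$), a gap the paper itself acknowledges and closes only by numerical verification, so your ``routine'' table lookup carries the same caveat.
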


\begin{proof}
See \ref{proofAppendix}.
\end{proof}

Although this formula for $\Omega$ in terms of a known special function is pleasing, the task of extracting from it concrete
information is not entirely straightforward. We proceed to discuss various limits.

The integrated density of states behaves in the limit $E\to+\infty$ as in the free case. 
We may obtain the  behaviour of $\gamma$ by using the perturbative
approach described in Ref.~\cite{BieTex08}. 
When the superpotential $W$ is given by
a superposition of delta--functions, the high energy Lyapunov exponent of
the supersymmetric Hamiltonian tends to a finite limit given by 
$\gamma_\infty = p\,\mathbb{E}(\ln\cosh w)$.
Therefore we obtain 
\begin{equation}
  \gamma(E) \sim \frac p2
  \left[ \frac1q - \digamma\left(\frac{q}{2}+1\right) 
     + \digamma\left(\frac{q+1}{2}\right) \right] 
  \hspace{1cm}
  \mbox{as } E\to\infty
  \:.
\end{equation}
For example, in the case $q=1$, we recover the limit  $\gamma \sim  p\,(\ln2-1/2)$
discussed earlier.

The low-energy limit may be conveniently studied by using the integral
representation of the characteristic function
\begin{equation}
  \label{eq:IntegralRepresentation}
  \Omega(-k^2) = k\, 
  \frac{ \displaystyle
    \int_0^\infty\frac{\d x}{\sqrt{x}}\,M_{\frac{p}{4k},q}(x)\,K_1(x/2) 
      }{\displaystyle
    \int_0^\infty\frac{\d x}{\sqrt{x}}\,M_{\frac{p}{4k},q}(x)\,K_0(x/2)
      }
\end{equation}
following from (\ref{integralFormula}). We set $E=-k^2=-1$ for simplicity. 
The $p\to\infty$ limit 
provides the $E\to0$
behaviour, after reintroducing $k$.
We now analyse the integrals in this limit.
There holds~\cite{GR}
\begin{equation}
M_{\frac{p}{4},q}(x)\sim x^{q+1/2}\;\; \text{as $x \rightarrow 0$}
\label{eq:AsympM}
\end{equation}
and,
since the Whittaker function solves the differential equation
\begin{equation*}
w''(x)=\frac{x^2-p\,x+(4q^2-1)}{4x^2}\,w(x)\,,
\end{equation*}
we expect the asymptotic form (\ref{eq:AsympM}) to hold for 
$x$ smaller than $(4q^2-1)/p$ 
(this reasoning assumes that $4q^2-1>0$). 
Moreover, we see from the differential equation that the Whittaker
function is a rapidly oscillating function in the interval 
$[x_-,x_+]$ where
$x_\pm=\frac{p}{2}\pm\sqrt{(\frac{p}{2})^2-4q^2+1}$;
in the limit of interest here we have 
$x_-\sim (4q^2-1)/p$ and $x_+\sim p$. 
Thus we expect that the contribution to the integrals of the interval
$[x_-,x_+]$ is negligible.
For $x$ exceeding $x_+$ we have 
\begin{equation*}
\frac{1}{\sqrt x} \,  M_{\frac{p}{4},q}(x)\,K_s(x/2)\sim x^{-\frac p4-1}
\:.
\end{equation*}
Finally both integrals in (\ref{eq:IntegralRepresentation}) are
dominated by the interval $[0,x_-]$, from which we get
\begin{equation}
  \label{eq:LimitBehaviour}
  \Omega(-1) \underset{ p\to\infty }{ \sim }
  \frac{q+1}{q} 
  \frac{ 2 }{x_-\ln(1/x_-) }
  \sim \frac{p}{\ln p}
  \:.
\end{equation}
Re-introducing $k$, we indeed get the expected behaviour
$\Omega(-k^2)\sim-1/\ln k$ for $k\to0$.

The limit $p\to\infty$ and
$q\to\infty$ keeping $p/q^2$ fixed corresponds to a case where 
$W(x)$ converges to a Gaussian white noise of strength
$g=p\,\mathbb{E}(w^2)=p/2q^2$--- a model that has been solved in
Ref.~\cite{BCGL,OvcEri77}  
(see also \cite{LGP}). Therefore we expect that the characteristic
function (\ref{eq:IntegralRepresentation}), i.e. Eq.~(\ref{eq:OmegaGeneralCase}),
converges in this limit towards the expression
\begin{equation}
\Omega(-k^2) 
\underset{p \to \infty\,, \;q=\sqrt{p/2g}}{\longrightarrow} 
k \frac{K_1(k/g)}{K_0(k/g)}
\:.
\end{equation}


\section{Concluding remarks: limitations of the Lyapunov exponent for the study of localisation}
\label{conclusionSection}

To close this short review, we point out the limitations of the Lyapunov exponent as a means of 
characterising the localised nature of the spectrum in a disordered system. 
A first observation is that the definition
(\ref{gamma}) relies on the assumption of a linear increase of
$\ln|\psi(x;E)|$, i.e. on the application of the central limit theorem. It is
however possible to consider models of disorder such that
$\ln|\psi(x;E)|$ scales as $x^\alpha$, leading to the  phenomenon  of
sub-localisation (for $\alpha<1$) or super-localisation
(for $\alpha>1$); this has been studied in, for example,  
Refs.~\cite{BieTex08,BooLuc07,Luc05,RusKanBunHav01} (see also references therein). 
However,
this limitation is easily overcome by extending the
notion of the Lyapunov exponent
so that it characterises
the typical length over which eigenstates are localised. 

Another, more serious, limitation of the usefulness of the Lyapunov exponent pertains
to the possible existence of extended states in one dimension.
This might occur when the disordered potential is correlated \cite{IzrKroMak12}--- as happens
for instance in the dimer model. It might occur also 
for symmetry reasons, as is the case for the supersymmetric Hamiltonians
studied in Refs.~\cite{BCGL,CDM,ComTex98,HagTex08,TexHag10} 
and in section~\ref{susySection};
there, the Lyapunov exponent $\gamma(E)$ decays to zero in the limit of zero energy.
In this respect,
it must first be pointed out that the vanishing of the Lyapunov
exponent is a necessary but not a sufficient condition for
delocalisation; sometimes it only signals
sub-localisation.
Secondly, several works on disordered supersymmetric
quantum mechanics have shown that
the Lyapunov exponent, when it vanishes at the bottom of the spectrum 
does not provide the relevant
length scale characterising the localisation of eigenstates at low non-zero energies.
As mentioned already in
section~\ref{susySection}, when
$\mathbb{E}(W)=0$, the Lyapunov exponent vanishes like  
$\gamma(E)\sim g/|\ln E|$ as $E\to0$
\cite{BCGL,CDM,ComTex98}. This suggests that low-energy
eigenstates are localised on a scale commensurate with $|\ln E|/g$.
However, the study of the ordered statistics problem \cite{Tex00,TexHag10}, as well as the analysis of other physical quantities such as the averaged
Green's function at non coinciding points~\cite{Gog82,BCGL} or the
boundary-sensitive averaged density of states \cite{TexHag10}, show
that eigenstates are localised on a much larger scale, namely
$|\ln E|^2/g$.

The inability of the Lyapunov exponent to capture the localisation
property for energies $E\sim E_c$ where $\gamma(E_c)=0$ may be
related to the fact that it is defined, via Eq.~(\ref{gamma}), in terms of a
solution $\psi(x;E)$ of the Cauchy problem that vanishes at just {\it one} boundary. 
This definition may indeed fail to describe the
interesting properties of the real eigenstates--- which are solutions of the
Schr\"odinger equation vanishing at the {\em two}
boundaries $x=0$ and $x=L$--- as these eigenstates become less
localised.

\section*{Acknowledgements}

This work was supported by {\it Triangle de la Physique} under Grant 2011-016T-1DDisQuaRM.


\appendix

\section{Proofs of Theorem \ref{whittakerTheorem} and its corollary}
\label{proofAppendix}

Let us look for a solution of the differential equation (\ref{differentialEquation}) of the form
\begin{equation*}
\varphi(z) := z^q \left ( \frac{1-z}{1+z} \right )^{\frac{p}{2 }} \,y(z)\,.
\end{equation*}
Then
\begin{equation}
y'' + \left [ \frac{2 q +1}{z} +  \frac{p/2}{z-1}+\frac{-p/2}{z+1}  \right ] y' + \frac{p\,q}{z (z-1)(z+1)}\, y = 0\,. 
\label{reducedEquation}
\end{equation}
The solution we require has various representations in terms of special functions.

\subsection{Heun's function}
\label{heunSubsection}

Heun's differential equation is \cite{Ro}
\begin{equation}
y'' + \left [ \frac{\eta}{z} + \frac{\delta}{z-1} + \frac{\varepsilon}{z-a} \right ] y' + \frac{\alpha\,\beta z -q}{z (z-1)(z-a)}\, y = 0
\label{heunEquation}
\end{equation}
where the parameters satisfy the relation
\begin{equation*}
\alpha + \beta + 1 = \eta + \delta + \varepsilon \,.
\end{equation*}
Heun's function is the particular solution analytic inside the unit disk:
\begin{equation}
Hl (a,q;\alpha,\beta,\eta,\delta;z) := \sum_{n=0}^\infty y_n z^n\,, \;\;|z|<1\,,
\label{heunFunction}
\end{equation}
where the $y_n$ satisfy the recurrence relation
\begin{eqnarray}
\nonumber
a (n+2) ( n+1+\eta) \,y_{n+2} \\ 
\nonumber
- \left [ (n+1)(n+\eta+\delta) a + (n+1)(n+\eta+\varepsilon) + q \right ] \,y_{n+1}  \\
+ (n+\alpha)(n+\beta) \,y_n = 0
\label{heunRecurrenceRelation}
\end{eqnarray}
with the starting values
\begin{equation}
y_0 = 1 \;\;\text{and}\;\; y_1 = \frac{q}{a \eta}\,.
\label{heunStartingValues}
\end{equation}
Comparing Equation (\ref{reducedEquation}) with Heun's equation (\ref{heunEquation}), we find
the particular solution
\begin{equation*}
y = Hl ( -1,-pq ; 0,2 q, 2q+1,p/2;z)\,.
\end{equation*}
The recurrence relation for the coefficients $y_n$ of this solution is
\begin{equation}
(n+2)(n+2q+2) \,y_{n+2} = p (n+q+1) \,y_{n+1} + n(n+2 q) \,y_n
\label{recurrenceRelation}
\end{equation}
with the starting values
\begin{equation}
y_0 = 1 \;\;\text{and}\;\; y_1 = \frac{p q}{2 q +1}\,.
\label{startingValues}
\end{equation}
It follows in particular that $y$ is strictly positive for $z>0$. Hence, for $z \in [0,1)$,
\begin{equation}
f_{p,q,1} (z) = \frac{C z^q}{1-z^2} \left ( \frac{1-z}{1+z} \right )^{\frac{p}{2}} Hl ( -1,-pq ; 0,2 q, 2q+1,p/2;z)
\label{heunDensity}
\end{equation}
where $C$ is a normalisation constant.
The property (\ref{involutionProperty})  provides the obvious formula
in the interval $z>1$.

\subsection{Gauss' hypergeometric function}
\label{hypergeometricSubsection}

Set
\begin{equation*}
a := \frac{p}{4} + q - \frac{1}{2}\,, \;\; b := \frac{p}{4} \;\;\text{and}\;\;c := q + \frac{1}{2}\,.
\end{equation*}
Then, by comparing the Taylor series, we find that
\begin{eqnarray}
Hl ( -1,-pq ; 0,2 q, 2q+1,p/2;z) \nonumber\\ 
= \frac{(1+z)^{2b}}{c(c+1)} 
\left \{ (a+1)(b+1) z^2 (1-z^2)\,{_2}F_1 \left (a+2,b+2;c+2;z^2 \right) \right . 
\nonumber\\
\left . + (c+1) \left [ c (1-z) -b z \right ] (1+z)\,
  {_2}F_1 \left(a+1,b+1;c+1;z^2 \right) \right \}\,. 
\end{eqnarray}
The contiguity relation
\begin{eqnarray}
t (1-t) (a+1)(b+1) \,{_2}F_1(a+2,b+2;c+2;t) \nonumber\\ 
+ \left [ c-(a+b+1) t \right ] (c+1) \,{_2}F_1(a+1,b+1;c+1;t) \nonumber\\
 - c(c+1) \,{_2}F_1(a,b;c;t) = 0 
\end{eqnarray}
leads to the simpler expression
\begin{eqnarray}
Hl ( -1,-pq ; 0,2 q, 2q+1,p/2;z) \\\nonumber
 = (1+z)^{2b} \left \{ {_2}F_1 \left ( a,b;c;z^2 \right )
- \frac{b}{c} z (1-z)\,{_2}F_1 \left (a+1,b+1;c+1;z^2 \right ) \right \}\,.
\end{eqnarray}
The differentiation formula
\begin{equation*}
{_2}F_1 '(a,b;c;t) = \frac{a b}{c} {_2}F_{1}(a+1,b+1;c+1;t)
\end{equation*}
then yields
\begin{eqnarray}
&&\hspace{-0.5cm}
Hl ( -1,-pq ; 0,2 q, 2q+1,p/2;z) \nonumber\\
 &=& (1+z)^{2b} \left \{ {_2}F_1 \left ( a,b;c;z^2 \right )
- \frac{z}{a} (1-z)\,{_2}F_1 '\left (a,b;c;z^2 \right ) \right \}  \\
 &=& (1+z)^{2b} \left \{ {_2}F_1 \left ( a,b;c;z^2 \right )
 - \frac{1}{2 a} (1-z)\, \frac{\d}{\d z} {_2}F_1 \left (a,b;c;z^2 \right ) \right \} \\
&=& \frac{-1}{2a} \frac{(1+z)^{2b}}{(1-z)^{2a-1}} 
  \Big\{ -2 a (1-z)^{2a-1}\,{_2}F_1 \left ( a,b;c;z^2 \right )
\nonumber\\
&&\hspace{3cm}
+  (1-z)^{2a}\, \frac{\d}{\d z} {_2}F_1 \left (a,b;c;z^2 \right ) \Big\}
\,.
\end{eqnarray}
So we obtain the ``compact'' representation
\begin{equation}
Hl ( -1,-pq ; 0,2 q, 2q+1,p/2;z) 
= -(1-z)^{2(1-q)} \left ( \frac{1+z}{1-z} \right )^{\frac{p}{2}}\,F_{p,q}'(z)\,,
\label{hypergeometricFormula}
\end{equation}
where
\begin{equation}
F_{p,q}(z) :=  \frac{(1-z)^{\frac{p}{2} + 2q-1}}{\frac{p}{2} + 2q-1}\,{_2}F_1 \left ( {\frac{p}{4} +q - \frac{1}{2}} ,\frac{p}{4}; q+\frac{1}{2};z^2 \right )\,.
\label{Ffunction}
\end{equation}

\subsection{The associated Legendre function}
\label{legendreSubsection}

Formula (6) in \cite{Er}, \S 3.13, expresses the hypergeometric function in terms of the associated Legendre function:
\begin{eqnarray}
\Gamma \left ( q + \frac{1}{2} \right )\,Q_{q-1}^{{\frac{p}{4} - \frac{1}{2}}} \left ( \text{cosh}\,\eta \right ) 
= 2 \sqrt{\pi}\,\e^{\i \left ( {\frac{p}{4} - \frac{1}{2}} \right ) \pi}\,\Gamma \left ( {\frac{p}{4} + q + \frac{1}{2}} \right ) \nonumber\\
\times \left [ \frac{\e^{-\eta}}{\left ( 1-\e^{-\eta} \right )^2} \right ]^q\,
\left ( \frac{1+\e^{-\eta}}{1-\e^{-\eta}} \right )^{{\frac{p}{4} - \frac{1}{2}}}\,F_{p,q} \left ( \e^{-\eta} \right )\,.
\label{legendreFormula}
\end{eqnarray} 

Equation (8) in \cite{GR}, \S 7.621, gives a  useful integral representation of the Legendre function in terms of a Whittaker function:
\begin{eqnarray}
\nonumber
\int_0^\infty \e^{-\frac{x}{4} \left (z +1/z \right )} M_{{\frac{p}{4} - \frac{1}{2}},q-\frac{1}{2}} (x)\,\frac{\d x}{x} \\
= 2 \frac{\Gamma(2q)}{\Gamma(q+{\frac{p}{4} - \frac{1}{2}})} \e^{-\i {\frac{p}{4} - \frac{1}{2}} \pi} \left ( \frac{1-z}{1+z} \right )^{\frac{p}{4} - \frac{1}{2}} Q_{q-1}^{\frac{p}{4} - \frac{1}{2}} \left ( \frac{z+1/z}{2}\right )\,.
\end{eqnarray}
Then, by Equation (\ref{legendreFormula}), we obtain
\begin{eqnarray}
-\left [ \frac{z}{(1-z)^2} \right ]^q\,\frac{1-z}{1+z}\,F_{p,q}'(z) 
= \frac{1}{4 \sqrt{\pi} ({\frac{p}{4} - \frac{1}{2}}+q)} \frac{\Gamma
  (q+1/2)}{\Gamma(2q)} 
\nonumber\\
\times \int_0^\infty \e^{-\frac{x}{4} \left (z +1/z \right )} M_{{\frac{p}{4} - \frac{1}{2}},q-\frac{1}{2}} (x) \left [ \frac{q}{xz} - \left ( \frac{1-z}{2 z} \right )^2 \right ] \,\d x \,.
\label{intermediateEquation}
\end{eqnarray}
Now,
\begin{eqnarray}
\nonumber
M_{{\frac{p}{4} - \frac{1}{2}},q-\frac{1}{2}} (x) \left [ \frac{q}{xz} - \left ( \frac{1-z}{2 z} \right )^2 \right ] 
= M_{{\frac{p}{4} - \frac{1}{2}},q-\frac{1}{2}} (x) \left [ \left ( \frac{q}{x} + \frac{1}{2} \right ) \frac{1}{z} - \frac{1+z^2}{4 z^2} \right ] \\
= \frac{\frac{p}{4}+q-\frac{1}{2}}{2 q \sqrt{x}} \frac{1}{z} M_{\frac{p}{4},q}(x) +
\frac{1}{z} M_{\frac{p}{4} - \frac{1}{2},q-\frac{1}{2}}'(x) - \frac{1+z^2}{4 z^2} M_{\frac{p}{4} - \frac{1}{2},q-\frac{1}{2}}(x)
\end{eqnarray}
where we have used Formulae (13.4.28) and (13.4.32) from \cite{AS} to obtain the last equality. The proof
of Theorem \ref{whittakerTheorem} follows after reporting this in Equation (\ref{intermediateEquation})
and using integration by parts for the term involving the derivative of the Whittaker function. 

\vspace{0.5cm}

\noindent
{\it Remark.}
Formula (5) in \cite{Er}, \S 2.8, expresses $F_{p,0}$ in simple terms:
\begin{equation}
F_{p,0}(z) = \frac{1}{p-2} \left [ \left ( \frac{1-z}{1+z}\right )^{\frac{p}{2}-1} + 1 \right ]\,.
\end{equation}
It then follows from Equation (\ref{legendreFormula}) and from the recurrence relation for the associated Legendre function
(see Formula (18) in \cite{Er}, \S 3.8) that, for $q \in {\mathbb N}$, $F_{p,q}$ may be expressed in terms of elementary functions.
For example, 
we have
\begin{equation}
F_{p,1}(z) = \frac{1/8}{ \left ( \frac{p}{4} - \frac{1}{2} \right ) \left ({\frac{p}{4} + \frac{1}{2}} \right )} \frac{(1-z)^2}{z} \left [ 1 - \left ( \frac{1-z}{1+z} \right )^{{\frac{p}{2} - 1}} \right ]
\end{equation}
and
\begin{eqnarray}
\nonumber
F_{p,2}(z) = \frac{3/32}{\left ({\frac{p}{4} - \frac{3}{2}} \right )
  \left ( \frac{p}{4} - \frac{1}{2} \right ) \left ({\frac{p}{4} +
      \frac{1}{2}} \right ) \left ({\frac{p}{4} + \frac{3}{2}} \right
  )}
 \\
\hspace{-1.5cm}
\times \,\frac{(1-z)^4}{z^3} \left \{ \left ( \frac{1-z}{1+z} \right )^{{\frac{p}{2} - 1}} \left [ z^2+ \left ( \frac{p}{2} - 1 \right ) z +1\right ]- \left [
 z^2- \left ( \frac{p}{2} - 1 \right ) z +1 \right ] \right \}\,.
\end{eqnarray}
Formulae for $f_{p,1,1}$ and $f_{p,2,1}$ are easily deduced.

\subsection{The characteristic function}
\label{characteristicSubsection}

Define
\begin{equation}
\hat{f}_{p,q,k} (s) := \int_0^\infty z^{-s} f_{p,q,k}(z)\,\d z\,.
\end{equation}
Using Theorem \ref{whittakerTheorem} and changing the order of integration, we find 
\begin{equation}
\label{integralFormula}
\hspace{-2cm}
\hat{f}_{p,q,1} (s) = 2\,C\,\int_0^\infty M_{\frac{p}{4},q} (x) \,K_s \left ( \frac{x}{2} \right )\,\frac{\d x}{\sqrt{x}} 
= 2\,\sqrt{\pi}\,C\,\int_0^\infty M_{\frac{p}{4},q} (x) \,W_{0,s}(x)\,\frac{\d x}{x}\,.
\end{equation}
Next, Formula (7) in \cite{Er}, \S 6.9, says
\begin{equation*}
M_{\frac{p}{4},q}(x) = (-1)^{q+\frac{1}{2}} M_{-\frac{p}{4},q}(-x)\,.
\end{equation*}
Hence
\begin{equation}
\hat{f}_{p,q,1}(s) = 2 \,\sqrt{\pi} \,C\,(-1)^{q+\frac{1}{2}} \,\int_0^\infty M_{-\frac{p}{4},q}(-x) \,W_{0,s}(x)\,\frac{\d x}{x}\,.
\label{whittakerFormula}
\end{equation}
The integral on the right-hand side is of the same form as that in \cite{GR}, \S 7.625, Formula (1)--- albeit outside the range indicated since in our case
$\alpha = -1$ and $\beta = 1$. Nevertheless, direct numerical verification indicates that the formula does remain valid and so
\begin{eqnarray}
\nonumber
\hat{f}_{p,q,1}(s) = 2\,\sqrt{\pi} \,C\,\frac{\Gamma (q+1+s) \Gamma (q+1-s)}{\Gamma \left ( q+\frac{3}{2} \right )} \\
\times {_3}F_2 \left ( q+\frac{1}{2}-\frac{p}{4},q+1+s,q+1-s;\,2q+1,q+\frac{3}{2};\,1 \right )\,.
\end{eqnarray}
The normalisation condition
\begin{equation*}
\hat{f}_{p,q,1}(0) = 1
\end{equation*}
yields
\begin{equation}
C = \frac{\Gamma \left ( q + \frac{3}{2} \right )}{2 \sqrt{\pi}\,\Gamma (q+1)^2\,{_3}F_2 \left ( q+\frac{1}{2}-\frac{p}{4},q+1,q+1;\,2q+1,q+\frac{3}{2};\,1 \right )}
\label{normalisationConstant}
\end{equation}
and so we deduce the formula:
\begin{eqnarray}
\nonumber
\hat{f}_{p,q,1}(s) = \frac{\Gamma (q+1+s) \Gamma (q+1-s)}{\Gamma ( q+1)^2} \\
\times \,\frac{{_3}F_2 \left ( q+\frac{1}{2}-\frac{p}{4},q+1+s,q+1-s;\,2q+1,q+\frac{3}{2};\,1 \right )}{{_3}F_2 \left ( q+\frac{1}{2}-\frac{p}{4},q+1,q+1;\,2q+1,q+\frac{3}{2};\,1 \right )}
\end{eqnarray}
for $-1 \le \text{Re} \,s \le 1$. Then,
using Formula (\ref{riccatiFormula}) for the Lyapunov exponent, together with the identity (\ref{negativeEnergyIdentity}) and the fact that $N(E)=0$ for $E<0$, 
we eventually find
\begin{equation}
\label{eq:OmegaGeneralCase}
\Omega(-k^2) = k \frac{q+1}{q} 
\frac{{_3}F_2 \left ( q+\frac{1}{2}-\frac{p}{4k},q+2,q;\,2q+1,q+\frac{3}{2};\,1 \right )}{{_3}F_2 \left ( q+\frac{1}{2}-\frac{p}{4k},q+1,q+1;\,2q+1,q+\frac{3}{2};\,1 \right )}\,.
\end{equation}
The analytic continuation from negative to positive energy consists of replacing $k$ by $-\i k$. This completes the proof of Corollary \ref{whittakerCorollary}.


\section*{References}

\end{document}